\documentclass[conference]{IEEEtran}

\usepackage{amsmath,amssymb,amsthm,mathtools}
\usepackage{bm}
\usepackage{hyperref}

\newtheorem{theorem}{Theorem}
\newtheorem{lemma}{Lemma}
\newtheorem{proposition}{Proposition}

\title{On the Subpacketization Level of the Banawan--Ulukus Multi-Message PIR Scheme}
\author{\IEEEauthorblockN{Anoosheh Heidarzadeh}\IEEEauthorblockA{Department of Electrical and Computer Engineering\\
Santa Clara University, Santa Clara, CA, USA\\
aheidarzadeh@scu.edu}}

\begin{document}
\maketitle
\thispagestyle{plain}

\begin{abstract}
This note analyzes a linear recursion that arises in the computation of the subpacketization level for the multi-message PIR scheme of Banawan and Ulukus.
We derive an explicit representation for the normalized subpacketization level $L$, whose smallest integer multiple yields the subpacketization level of the scheme, in terms of the number of servers $N$, the total number of messages $K$, and the number of demand messages $D$.
The resulting formula shows that
$L$ is a polynomial in $N$ with nonnegative coefficients, and its leading term is $N^{K-D+1}/D$.
\end{abstract}

\section{Setup and Main Results}\label{sec:setup}

Fix integers $K>D>1$ and $N>1$, and let $\{L_1,\dots,L_K\}$ be a sequence satisfying
\begin{align}
&L_K = (N-1)^{K-D}, \label{eq:LK_init}\\
&L_{K-D+1} = L_{K-D+2} = \cdots = L_{K-1} = 0, \label{eq:LK_zeros}\\
&L_j = \frac{1}{N-1}\sum_{i=1}^D \binom{D}{i}\,L_{i+j},
\quad j\in[1:K-D]. \label{eq:LK_rec}
\end{align}
Define
\begin{equation}\label{eq:L_def}
L \triangleq
\frac{N}{D}\sum_{t=1}^{K}\binom{K}{t}L_t
\;-\;
\frac{N}{D}\sum_{t=1}^{K-D}\binom{K-D}{t}L_t.
\end{equation}
The quantity $L$ in~\eqref{eq:L_def} is the normalized subpacketization level of the multi-message PIR scheme in~\cite{BU2018}, meaning that the scheme’s subpacketization level is the smallest integer multiple of $L$~\cite{HWS2025}. Here $N$ denotes the number of servers, $K$ the total number of messages, and $D$ the number of demand messages.

Let
\[
T \triangleq K-D+1,
\quad
S \triangleq \left\lfloor \frac{T(D-1)}{D}\right\rfloor .
\]
For each $n\in[0:T(D-1)]$, define
\begin{equation}\label{eq:cdef}
a_n \triangleq [x^n](1+x+\cdots+x^{D-1})^T,
\end{equation}
where $[x^n]p(x)$ denotes the coefficient of $x^n$ in the polynomial $p(x)$.
Equivalently,
\[
(1+x+\cdots+x^{D-1})^T=\sum_{n=0}^{T(D-1)} a_n x^n.
\]

\begin{theorem}\label{thm:summation}
The quantity $L$ in~\eqref{eq:L_def} can be written as
\begin{equation}\label{eq:Lpoly}
L=\frac{1}{D}\sum_{k=0}^{S} a_{kD}\,N^{T-k}.
\end{equation}
In particular, $L$ is a polynomial in $N$ with nonnegative coefficients, and its leading term is $N^T/D$.
\end{theorem}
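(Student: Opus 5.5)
The plan is to encode the recursion in a generating function, rewrite both sums in \eqref{eq:L_def} as a single coefficient extraction, and then evaluate that coefficient with a residue computation. The residue step should produce a roots-of-unity filter applied to $(1+x+\cdots+x^{D-1})^T$, which selects exactly the coefficients $a_{kD}$.

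\emph{Generating function.} Reverse the indexing by setting $b_m \triangleq L_{K-m}$ for $m\in[0:K-1]$. Then $b_0=(N-1)^{K-D}$, $b_1=\cdots=b_{D-1}=0$, and by \eqref{eq:LK_rec}
\[
(N-1)b_m=\sum_{i=1}^D\binom{D}{i}b_{m-i},\qquad m\in[D:K-1].
\]
Let $B(z)=\sum_{m=0}^{K-1}b_mz^m$. Compare $(N-1)B(z)$ with $((1+z)^D-1)B(z)$ coefficientwise modulo $z^K$. The recursion fails only at $m=0$ and at $m\in[1:D-1]$. At $m=0$ the left side is $(N-1)b_0$. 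At $m\in[1:D-1]$ the left side vanishes while the right side equals $\binom{D}{m}b_0$. Correcting for these terms should give
\[
B(z)\bigl(N-(1+z)^D\bigr)\equiv b_0\bigl(N+z^D-(1+z)^D\bigr)\pmod{z^K},
\]
so that $B(z)\equiv b_0\,\frac{N+z^D-(1+z)^D}{N-(1+z)^D}$ as a power series truncated at degree $K-1$.

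\emph{Single coefficient extraction.} Write $\binom{K}{m}=[z^{K-m}](1+z)^K$, which gives $\sum_t\binom{K}{t}L_t=[z^K](1+z)^KB(z)$. Similarly, $\binom{K-D}{m-D}=[z^{K-m}](1+z)^{K-D}$ holds for $m\ge D$, and the terms with $m<D$ contribute nothing (only $b_0$ is nonzero there, and $\binom{K-D}{-D}=0$). Hence the second sum is $[z^K](1+z)^{K-D}B(z)$. Combining the two,
\[
L=\frac{N(N-1)^{K-D}}{D}\,[z^K]\,(1+z)^{K-D}\bigl((1+z)^D-1\bigr)\frac{N+z^D-(1+z)^D}{N-(1+z)^D}.
\]

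\emph{Residue evaluation (main obstacle).} Express the coefficient as $\frac{1}{2\pi i}\oint z^{-K-1}(\cdots)\,dz$ on a small circle and move the contour outward. The finite poles other than $z=0$ are simple poles at $1+z=\zeta N^{1/D}$, where $\zeta$ ranges over the $D$-th roots of unity. The residue at infinity should vanish or be explicitly computable, because the integrand has degree at most $-2$ for $K\ge D+1$; I would check this carefully. At the pole corresponding to $\zeta$, set $\rho=\zeta N^{1/D}$. Then $z=\rho-1$, $(1+z)^D=N$, and the factor $(1+z)^D-1$ becomes $N-1$. The residue is essentially
\[
\frac{(N-1)\,\rho^{K-D}(\rho-1)^{D-K-1}(\rho-1)^D}{D\rho^{D-1}}.
\]
I then expect to combine $(N-1)^{K-D}=\prod_\zeta(\rho-1)^{\cdots}$-type factors via
\[
\frac{N-1}{\rho-1}=1+\rho+\cdots+\rho^{D-1},
\]
so that each residue becomes a monomial in $\rho$ times $(1+\rho+\cdots+\rho^{D-1})^{T}$. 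Expanding with \eqref{eq:cdef}, summing over $\zeta$ kills every power $\rho^n$ with $D\nmid n$, and each surviving power contributes $D\,N^{n/D}$. This should leave exactly $\frac1D\sum_k a_{kD}N^{T-k}$ after the exponents are aligned; the range $kD\le T(D-1)$ gives $k\le S$. The delicate points are the bookkeeping of exponents and signs, and handling the extra term $z^D$ in the numerator, which I expect to cancel against the residue at infinity. If the contour argument becomes unwieldy, I would instead verify the identity through the recursion directly: show that the claimed closed form, with $N$ written in terms of $\rho^D$, satisfies the same generating-function equation.

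The final claims are then immediate. Each $a_{kD}\ge0$, and the $k=0$ term is $a_0N^T/D=N^T/D$, which is the leading term.
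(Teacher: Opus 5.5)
Your plan is sound and, up to formula \eqref{eq:L_root_unity}, takes a genuinely different route from the paper's. The paper solves the reversed recurrence with the ansatz $M_t=\sum_m c_m r_m^t$. It has to guess the constants $c_m$ and verify the initial conditions by a separate roots-of-unity computation (Lemma~\ref{lem:cm}). It then evaluates the two binomial sums one at a time (Lemma~\ref{lem:bin_ids}, Proposition~\ref{prop:L_intermediate}). In your version, the closed form $B(z)\equiv b_0\,(N+z^D-(1+z)^D)/(N-(1+z)^D)$ comes straight from the recursion with no guessing, and your derivation of it is correct. The two sums merge into one coefficient extraction, and the residue theorem performs the partial-fraction decomposition that the paper does by hand. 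The cost is analytic bookkeeping; the gain is a mechanical derivation that would adapt easily to other initial data. From \eqref{eq:L_root_unity} onward, both arguments use the same geometric-sum identity and roots-of-unity filter.

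Three points need repair when you write it out.

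(i) Replacing the truncated $B(z)$ by the full series inside $[z^K]$ is valid only because $(1+z)^{K-D}((1+z)^D-1)$ has zero constant term, so the series' $z^K$ coefficient never enters. You should say this explicitly.

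(ii) The integrand behaves like $Dz^{-2}$ at infinity for every $K$, so its residue at infinity is exactly $0$ and nothing has to cancel against it. The $z^D$ term is what lowers the numerator's degree to $D-1$, and at each finite pole it is the whole numerator, $(\rho-1)^D$. The residue at $1+z=\rho$ is therefore $-(N-1)\rho^{T-D}/\bigl(D(\rho-1)^T\bigr)$. Your displayed expression counts $(\rho-1)^D$ twice and drops the minus sign; that sign cancels against the one in $[z^K]=-\sum\mathrm{Res}$. The exponent $-T$ is exactly what lets $(N-1)^{K-D}\cdot(N-1)=(N-1)^T$ be absorbed into $(1+\rho+\cdots+\rho^{D-1})^T$. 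This absorption happens residue by residue, not through a product over $\zeta$.

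(iii) ``After the exponents are aligned'' hides a real step. If you expand $\rho^{T-D}(1+\rho+\cdots+\rho^{D-1})^T$ in positive powers of $\rho$, the filter keeps $n\equiv -T\pmod{D}$. You then get $L=\frac1D\sum_{n\equiv -T}a_nN^{(n+T)/D}$, not the coefficients $a_{kD}$ directly. To reach \eqref{eq:Lpoly} you need the palindromic symmetry $a_n=a_{T(D-1)-n}$ with the substitution $n=T(D-1)-kD$. Equivalently, pull out $\rho^{T(D-1)}$, using $\rho^{T-D}\rho^{T(D-1)}=N^{T-1}$, and expand in powers of $\rho^{-1}$. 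This is what the paper does through $s^{-1}$ in Lemma~\ref{lem:ratio}.
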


\section{Proof of Theorem~\ref{thm:summation}}\label{sec:proof}


Define the reversed sequence
\[
M_t \triangleq L_{K-t},\quad t\in [0:K-1].
\]
Then \eqref{eq:LK_init} and \eqref{eq:LK_zeros} become
\begin{equation}\label{eq:M_init}
M_0=(N-1)^{K-D},\quad M_1=M_2=\cdots=M_{D-1}=0.
\end{equation}

\begin{lemma}\label{lem:reverse_rec}
For each $t\in [D:K-1]$, the recursion \eqref{eq:LK_rec} implies
\begin{equation}\label{eq:M_rec}
N M_t = \sum_{i=0}^{D}\binom{D}{i}M_{t-i}.
\end{equation}
\end{lemma}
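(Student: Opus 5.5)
The plan is to translate the recursion \eqref{eq:LK_rec} directly into the reversed indexing and then rearrange it. Take $t\in[D:K-1]$ and set $j\triangleq K-t$. Then $j\in[1:K-D]$, so \eqref{eq:LK_rec} applies at this $j$. Multiplying it by $N-1$ gives
\begin{equation*}
(N-1)L_{K-t}=\sum_{i=1}^{D}\binom{D}{i}L_{K-t+i}.
\end{equation*}
By definition $L_{K-t}=M_t$. For the terms on the right we want $L_{K-t+i}=M_{t-i}$, and this identification is legitimate only when $t-i\in[0:K-1]$. Since $i\le D\le t$, we have $t-i\ge 0$. Since $i\ge 1$, we have $t-i\le K-2$. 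Hence every index $K-t+i$ appearing on the right lies in $[K-t+1:K]\subseteq[1:K]$, so each such $L$ is defined and equals the corresponding $M$.

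Substituting these identifications gives
\begin{equation*}
(N-1)M_t=\sum_{i=1}^{D}\binom{D}{i}M_{t-i}.
\end{equation*}
Now add $M_t=\binom{D}{0}M_{t-0}$ to both sides. The left side becomes $NM_t$, and the right side becomes the full sum over $i\in[0:D]$. This is exactly \eqref{eq:M_rec}.

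The only point that needs care is the index bookkeeping. One must confirm that $j=K-t$ lands in $[1:K-D]$, so that the recursion may be invoked. One must also confirm that the shifted indices never leave the range on which $M$ is defined, in particular that $M_0=L_K$ is reached exactly when $i=t=D$. Once both checks are made, the identity is immediate.
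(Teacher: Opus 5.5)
Your proposal is correct and follows essentially the same route as the paper: substitute $j=K-t$, rewrite both sides in terms of $M$, multiply by $N-1$, and add the $i=0$ term. The extra index bookkeeping you include, showing that $K-t+i\in[1:K]$ so that every $M_{t-i}$ is defined, is left implicit in the paper but is accurate.
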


\begin{proof}
Fix ${t\in[D:K-1]}$ and set ${j=K-t}$. Then ${j\in [1:K-D]}$, so \eqref{eq:LK_rec} applies:
\[
L_{K-t}=\frac{1}{N-1}\sum_{i=1}^{D}\binom{D}{i}L_{i+K-t}.
\]
Since $L_{K-t}=M_t$ and $L_{i+K-t}=L_{K-(t-i)}=M_{t-i}$, we have
\[
M_t=\frac{1}{N-1}\sum_{i=1}^{D}\binom{D}{i}M_{t-i},
\]
or equivalently $
(N-1)M_t=\sum_{i=1}^{D}\binom{D}{i}M_{t-i}$.
Adding the $i=0$ term $\binom{D}{0}M_t=M_t$ to both sides gives
\[
NM_t=\sum_{i=0}^{D}\binom{D}{i}M_{t-i},
\]
which matches \eqref{eq:M_rec}.
\end{proof}

To solve \eqref{eq:M_rec}, take $M_t=r^t$ with $r\neq 0$. Substitution gives
\begin{align*}
\sum_{i=0}^{D}\binom{D}{i}r^{t-i}=N r^t
&\;\Longleftrightarrow\;
\sum_{i=0}^{D}\binom{D}{i}r^{-i}=N \\
&\;\Longleftrightarrow\;
\left(1+r^{-1}\right)^D=N .
\end{align*}
Let $s>0$ satisfy $s^D=N$ and, for each $m\in [0:D-1]$, define
\[
\omega_m \triangleq e^{\mathrm{i}2\pi m/D},
\quad
u_m\triangleq \omega_m s.
\]

\begin{lemma}\label{lem:rou}
For any integer $P$,
\[
\sum_{m=0}^{D-1}\omega_m^P
=
\begin{cases}
D,& D\mid P,\\
0,& D\nmid P.
\end{cases}
\]
\end{lemma}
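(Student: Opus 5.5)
The plan is to use the standard finite geometric series argument for roots of unity, treating the two cases $D\mid P$ and $D\nmid P$ separately. Write $\omega\triangleq\omega_1=e^{\mathrm{i}2\pi/D}$, so that $\omega_m=\omega^m$ and hence $\omega_m^P=(\omega^P)^m$ for every integer $P$, including negative $P$, since $\omega\neq 0$. The sum in Lemma~\ref{lem:rou} then becomes $\sum_{m=0}^{D-1}z^m$ with $z\triangleq\omega^P=e^{\mathrm{i}2\pi P/D}$.

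\emph{Case $D\mid P$.} Write $P=qD$ for some integer $q$. Then $z=e^{\mathrm{i}2\pi q}=1$, so every one of the $D$ summands equals $1$ and the sum is $D$.

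\emph{Case $D\nmid P$.} The argument $2\pi P/D$ is not an integer multiple of $2\pi$, so $z\neq 1$. On the other hand, $z^D=e^{\mathrm{i}2\pi P}=1$. The geometric series formula therefore gives
\[
\sum_{m=0}^{D-1}z^m=\frac{z^D-1}{z-1}=0 .
\]

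There is no real obstacle in this argument. The only point needing care is the justification that $z\neq1$ exactly when $D\nmid P$, which is needed to divide by $z-1$. This follows because $e^{\mathrm{i}\theta}=1$ if and only if $\theta\in 2\pi\mathbb{Z}$, and $2\pi P/D\in 2\pi\mathbb{Z}$ if and only if $D\mid P$.
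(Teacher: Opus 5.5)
Your proof is correct and follows the paper's argument essentially verbatim: when $D\mid P$ every summand equals $1$, and when $D\nmid P$ you evaluate the geometric series in $z=\omega_1^P$, using $z\neq 1$ and $z^D=1$. Your explicit check that $z\neq 1$ exactly when $D\nmid P$, which the paper asserts without justification, is a small welcome addition.
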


\begin{proof}
If $D\mid P$, then $\omega_m^P=(\omega_m^D)^{P/D}=1$ for every ${m\in[0:D-1]}$, and hence
$\sum_{m=0}^{D-1}\omega_m^P=D$.
If $D\nmid P$, then $\omega_1^P\neq 1$ and, since $\omega_m=\omega_1^m$,
\[
\sum_{m=0}^{D-1}\omega_m^P=\sum_{m=0}^{D-1}(\omega_1^P)^m
=\frac{1-(\omega_1^P)^D}{1-\omega_1^P}
=\frac{1-(\omega_1^D)^P}{1-\omega_1^P}
=0,
\]
which proves the lemma.
\end{proof}

\begin{lemma}\label{lem:roots}
The solutions to ${(1+r^{-1})^D=N}$ are
\begin{equation}\label{eq:r_roots}
r_m\triangleq \frac{1}{u_m-1},\quad m\in [0:D-1].
\end{equation}
\end{lemma}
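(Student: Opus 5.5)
The lemma asks for the roots of $(1+r^{-1})^D=N$, and the plan is to solve directly in the variable $w\triangleq 1+r^{-1}$. Since $r\neq 0$, the map $r\mapsto w=1+r^{-1}$ is a bijection from $\mathbb{C}\setminus\{0\}$ onto $\mathbb{C}\setminus\{1\}$. Its inverse is $r=1/(w-1)$. The equation therefore becomes $w^D=N$ with the side condition $w\neq 1$.

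First I will show that the $D$-th roots of $N$ are exactly $u_m=\omega_m s$ for $m\in[0:D-1]$.
\begin{itemize}
\item Each $u_m$ is a root, since $u_m^D=\omega_m^D s^D=1\cdot N=N$.
\item The $u_m$ are pairwise distinct: $s>0$, and the $\omega_m$ are distinct because $e^{\mathrm{i}2\pi m/D}$ are distinct for $m\in[0:D-1]$.
\item The polynomial $w^D-N$ has degree $D$, so it has at most $D$ roots. Hence these $D$ values are all of them.
\end{itemize}

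Next I will check that every $u_m$ meets the side condition $u_m\neq 1$, so that $r_m=1/(u_m-1)$ is well defined. Because $N>1$, we have $|u_m|=s=N^{1/D}>1$, so no $u_m$ can equal $1$.

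Mapping back through the bijection gives $r_m=1/(u_m-1)$, and each $r_m$ is nonzero and finite. Directly, $1+r_m^{-1}=1+(u_m-1)=u_m$, so $(1+r_m^{-1})^D=u_m^D=N$. Conversely, any solution $r$ has $w=1+r^{-1}$ equal to a $D$-th root of $N$, so $w=u_m$ for some $m$, and then $r=r_m$. The $r_m$ are also pairwise distinct, since the $u_m$ are and the map is injective.

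Every step is routine. The only point needing care is the requirement $u_m\neq 1$, which is exactly where the hypothesis $N>1$ enters.
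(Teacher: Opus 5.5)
Your proof is correct and follows the paper's approach: both reduce the equation to $1+r^{-1}$ being a $D$-th root of $N=s^D$, hence equal to some $u_m=\omega_m s$, and then invert to get $r=1/(u_m-1)$, with the converse checked by direct substitution. Your extra checks fill in details the paper leaves implicit: the degree count showing the $u_m$ exhaust the $D$-th roots of $N$, and $u_m\neq 1$ (from $|u_m|=s>1$), which guarantees $r_m$ is well defined.
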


\begin{proof}
Let $r$ be any solution. Since $N=s^D$, the equation $(1+r^{-1})^D=s^D$ implies
$1+r^{-1}=\omega_m s$ for some ${m\in[0:D-1]}$.
Therefore,
\[
r=\frac{1}{\omega_m s-1}=\frac{1}{u_m-1}.
\]
Conversely, each $r_m$ in \eqref{eq:r_roots} satisfies $1+r_m^{-1}=u_m$ and hence
$(1+r_m^{-1})^D=u_m^D=(\omega_m s)^D=s^D=N$.
\end{proof}

Therefore, the general solution to \eqref{eq:M_rec} has the form
\begin{equation}\label{eq:M_general}
M_t=\sum_{m=0}^{D-1}c_m r_m^t,
\end{equation}
where the coefficients $\{c_m\}_{m=0}^{D-1}$ are determined by the initial
conditions \eqref{eq:M_init}.

\begin{lemma}\label{lem:cm}
With coefficients
\begin{equation}\label{eq:cm_choice}
c_m \triangleq \frac{(N-1)^{K-D}}{D}\cdot \frac{(u_m-1)^{D-1}}{u_m^{D-1}},
\quad m\in[0:D-1],
\end{equation}
the sequence $\{M_t\}_{t=0}^{K-1}$ defined by \eqref{eq:M_general} satisfies the
initial conditions \eqref{eq:M_init}.
\end{lemma}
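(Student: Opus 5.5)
We need $M_t=\sum_m c_m r_m^t$ to equal $(N-1)^{K-D}$ at $t=0$ and $0$ for $t\in[1:D-1]$. Since $r_m=1/(u_m-1)$, for $t\in[0:D-1]$ we get
\[
c_m r_m^t=\frac{(N-1)^{K-D}}{D}\cdot\frac{(u_m-1)^{D-1-t}}{u_m^{D-1}} .
\]
The exponent $D-1-t$ is a nonnegative integer, so we can expand $(u_m-1)^{D-1-t}$ binomially and write $u_m^{-(D-1)}=u_m\,u_m^{-D}=u_m/N$, because $u_m^D=\omega_m^D s^D=N$. This gives
\[
M_t=\frac{(N-1)^{K-D}}{DN}\sum_{j=0}^{D-1-t}\binom{D-1-t}{j}(-1)^{D-1-t-j}\,s^{j+1}\sum_{m=0}^{D-1}\omega_m^{\,j+1}.
\]

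The inner sum is handled by Lemma~\ref{lem:rou} with $P=j+1$. Here $1\le j+1\le D-t\le D$, so $D\mid j+1$ holds only when $j+1=D$. That value of $j$ is in range only if $D-1-t\ge D-1$, i.e.\ only when $t=0$. Hence for $t\in[1:D-1]$ every inner sum vanishes and $M_t=0$. For $t=0$ only the term $j=D-1$ survives; it has sign $(-1)^0=1$, binomial coefficient $1$, and contributes $s^D\cdot D=ND$. Therefore $M_0=\frac{(N-1)^{K-D}}{DN}\cdot ND=(N-1)^{K-D}$, as required.

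The one point needing care is the normalization: we must use $u_m^D=N$ to convert $u_m^{-(D-1)}$ into $u_m/N$, so that every power of $\omega_m$ in the sum is a positive exponent no larger than $D$. Once that substitution is made, Lemma~\ref{lem:rou} applies term by term and the rest is bookkeeping. The denominators are nonzero because $u_m\neq 0$ and $u_m\ne1$; the latter holds since $u_m=1$ would force $N=u_m^D=1$, contradicting $N>1$. This also confirms that each $r_m$ in Lemma~\ref{lem:roots} is well defined.
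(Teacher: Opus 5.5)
Your proof is correct and follows the paper's argument: substitute $c_m$, expand $(u_m-1)^{D-1-t}$ binomially, and apply the roots-of-unity filter of Lemma~\ref{lem:rou}, where only $t=0$ leaves a surviving term. The one cosmetic difference is your shift $u_m^{-(D-1)}=u_m/N$ to make the exponents lie in $[1:D]$. The paper instead applies Lemma~\ref{lem:rou} directly to the negative exponents $l-(D-1)\in\{-(D-1),\ldots,-t\}$, since the lemma holds for every integer $P$, so that shift is optional rather than necessary; your check that $u_m\neq 1$ is a small well-definedness point the paper leaves implicit.
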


\begin{proof}
Substituting \eqref{eq:cm_choice} into \eqref{eq:M_general} and using $r_m=1/(u_m-1)$,
\begin{align*}
M_t
&=\sum_{m=0}^{D-1}\frac{(N-1)^{K-D}}{D}\cdot \frac{(u_m-1)^{D-1}}{u_m^{D-1}}\cdot \frac{1}{(u_m-1)^t} \\
&=\frac{(N-1)^{K-D}}{D}\sum_{m=0}^{D-1}\frac{(u_m-1)^{D-1-t}}{u_m^{D-1}}.
\end{align*}
For each $t\in[0:D-1]$, expanding $(u_m-1)^{D-1-t}$ and dividing by $u_m^{D-1}$ gives
\begin{align*}
& \frac{(u_m-1)^{D-1-t}}{u_m^{D-1}}\\
& \quad =\sum_{l=0}^{D-1-t}\binom{D-1-t}{l}u_m^{l-(D-1)}(-1)^{D-1-t-l}.
\end{align*}
Therefore,
\begin{align}
& \sum_{m=0}^{D-1}\frac{(u_m-1)^{D-1-t}}{u_m^{D-1}} \nonumber \\
& \quad =
\sum_{l=0}^{D-1-t}\binom{D-1-t}{l}(-1)^{D-1-t-l} \sum_{m=0}^{D-1} u_m^{l-(D-1)}.\label{eq:mysum}
\end{align}
Since ${u_m=\omega_m s}$, for any integer $P$ we have ${\sum_{m=0}^{D-1}u_m^{P}}={s^{P}\sum_{m=0}^{D-1}\omega_m^{P}}$.
By Lemma~\ref{lem:rou}, $\sum_{m=0}^{D-1}\omega_m^P$ equals $D$ if $D\mid P$, and equals zero otherwise.

For ${t=0}$, the exponents ${l-(D-1)}$ range over ${\{-(D-1),\ldots,0\}}$.
The only multiple of $D$ in this set is $0$, which occurs at $l=D-1$ with coefficient $1$.
Thus~\eqref{eq:mysum} equals $D$, and hence ${M_0=(N-1)^{K-D}}$.

For ${1\le t\le D-1}$, the exponents ${l-(D-1)}$ range over ${\{-(D-1),\ldots,-t\}}$, none of which is divisible by $D$.
Therefore \eqref{eq:mysum} equals zero, and consequently $M_t=0$.
\end{proof}

Since $L_t=M_{K-t}$, we can rewrite the sums in \eqref{eq:L_def} as
\begin{align}
\sum_{t=1}^{K}\binom{K}{t}L_t
&=\sum_{t=0}^{K-1}\binom{K}{t}M_t, \label{eq:S1}\\
\sum_{t=1}^{K-D}\binom{K-D}{t}L_t
&=\sum_{t=D}^{K-1}\binom{K-D}{t-D}M_t. \label{eq:S2}
\end{align}

\begin{lemma}\label{lem:bin_ids}
For each ${m\in [0:D-1]}$,
\begin{align}
\sum_{t=0}^{K-1}\binom{K}{t}r_m^t
&=(1+r_m)^K-r_m^K, \label{eq:bin1}\\
\sum_{t=D}^{K-1}\binom{K-D}{t-D}r_m^t
&=r_m^D\left((1+r_m)^{K-D}-r_m^{K-D}\right). \label{eq:bin2}
\end{align}
\end{lemma}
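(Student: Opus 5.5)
Both identities are finite applications of the binomial theorem, so the plan is simply to complete each sum to a full binomial expansion and then subtract the single missing term. No properties of $r_m$ beyond its being a (nonzero) complex number are needed; in particular, the identities hold for every fixed $m\in[0:D-1]$ without using Lemma~\ref{lem:roots}.

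For \eqref{eq:bin1}, write
\[
(1+r_m)^K=\sum_{t=0}^{K}\binom{K}{t}r_m^t .
\]
The sum on the left of \eqref{eq:bin1} runs only over $t\in[0:K-1]$, so it misses exactly the $t=K$ term $\binom{K}{K}r_m^K=r_m^K$. Subtracting this term gives \eqref{eq:bin1} immediately.

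For \eqref{eq:bin2}, first reindex with $s'=t-D$, so that $t\in[D:K-1]$ corresponds to $s'\in[0:K-D-1]$ and $r_m^t=r_m^D r_m^{s'}$. Factoring out $r_m^D$ leaves
\[
\sum_{s'=0}^{K-D-1}\binom{K-D}{s'}r_m^{s'}.
\]
This is the binomial expansion of $(1+r_m)^{K-D}$ with the top term $s'=K-D$, namely $r_m^{K-D}$, removed. Hence the sum equals $r_m^D\bigl((1+r_m)^{K-D}-r_m^{K-D}\bigr)$, which is \eqref{eq:bin2}.

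There is no real obstacle here. The only points requiring care are the index bookkeeping in the shift $t\mapsto t-D$, including confirming that the range is nonempty since $K>D$, and noting that the upper limit $K-1$ is precisely what excludes the top binomial term in each case.
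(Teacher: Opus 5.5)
Your proposal is correct and follows essentially the same route as the paper: expand each sum to the full binomial sum and subtract the missing $t=K$ (respectively $t=K-D$) term, and for \eqref{eq:bin2} factor out $r_m^D$ and shift the index by $D$. The paper describes this last step as applying \eqref{eq:bin1} with $K$ replaced by $K-D$, which is the same computation you wrote out.
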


\begin{proof}
By the binomial theorem,
\[
\sum_{t=0}^{K}\binom{K}{t}r_m^{t} = (1+r_m)^K.
\]
Dropping the $t=K$ term yields~\eqref{eq:bin1}.
For \eqref{eq:bin2}, factoring out $r_m^D$ and reindexing gives
\begin{align*}
\sum_{t=D}^{K-1}\binom{K-D}{t-D}r_m^t
&= r_m^D\sum_{t=D}^{K-1}\binom{K-D}{t-D}r_m^{t-D} \\
&= r_m^D\sum_{t=0}^{K-D-1}\binom{K-D}{t}r_m^t \\
&= r_m^D\left((1+r_m)^{K-D}-r_m^{K-D}\right),
\end{align*}
where the last step follows by applying \eqref{eq:bin1} with $K$ replaced by $K-D$.
\end{proof}

\begin{proposition}\label{prop:L_intermediate}
The quantity $L$ in \eqref{eq:L_def} can be written as
\begin{equation}\label{eq:L_intermediate}
L=\frac{N}{D}\sum_{m=0}^{D-1}c_m\,(N-1)\,r_m^D(1+r_m)^{K-D}.
\end{equation}
\end{proposition}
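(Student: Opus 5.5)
The plan is to substitute the closed form \eqref{eq:M_general} into the two sums \eqref{eq:S1} and \eqref{eq:S2}, swap the finite sums, and evaluate the inner sums with Lemma~\ref{lem:bin_ids}.

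First I need to know that $M_t=\sum_{m}c_m r_m^t$, with $c_m$ as in \eqref{eq:cm_choice}, really equals $L_{K-t}$ for all $t\in[0:K-1]$. Three facts give this.
\begin{itemize}
\item Each $r_m$ is well defined, since $u_m\neq 1$: $|u_m|=s>1$ because $N>1$. Also $r_m\neq 0$.
\item By Lemma~\ref{lem:roots}, each sequence $r_m^t$ satisfies \eqref{eq:M_rec}, so their linear combination does too.
\item By Lemma~\ref{lem:cm}, the combination matches the initial values \eqref{eq:M_init}.
\end{itemize}
By Lemma~\ref{lem:reverse_rec}, the true sequence $M_t=L_{K-t}$ satisfies the same recursion for $t\in[D:K-1]$. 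That recursion determines $M_t$ uniquely from $M_{t-1},\dots,M_{t-D}$, because the coefficient of $M_t$ is $N-1\neq 0$. Induction on $t$ then shows the two sequences agree.

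Next I substitute. Combining \eqref{eq:S1}, \eqref{eq:S2} and Lemma~\ref{lem:bin_ids} gives
\begin{align*}
L=\frac{N}{D}\sum_{m=0}^{D-1}c_m\Big[&(1+r_m)^K-r_m^K\\
&-r_m^D(1+r_m)^{K-D}+r_m^{K}\Big].
\end{align*}
The $r_m^K$ terms cancel. What remains in the bracket is $(1+r_m)^K-r_m^D(1+r_m)^{K-D}$.

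The key step is to factor out $(1+r_m)^{K-D}$ and simplify $(1+r_m)^D$ using the characteristic equation. Writing $1+r_m=r_m(1+r_m^{-1})$ and using $(1+r_m^{-1})^D=N$ from Lemma~\ref{lem:roots} gives $(1+r_m)^D=N r_m^D$. The bracket therefore equals
\[
(1+r_m)^{K-D}\bigl(Nr_m^D-r_m^D\bigr)=(N-1)\,r_m^D(1+r_m)^{K-D},
\]
which is exactly \eqref{eq:L_intermediate}.

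None of these steps is hard. The main point needing care is the justification in the first step, namely that the closed form coincides with the actual sequence on the whole range $[0:K-1]$, so that the substitution is legitimate.
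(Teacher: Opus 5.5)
Your proposal is correct and follows the paper's proof essentially step for step: substitute $M_t=\sum_m c_m r_m^t$ into \eqref{eq:S1}--\eqref{eq:S2}, apply Lemma~\ref{lem:bin_ids}, cancel the $r_m^K$ terms, and use $(1+r_m)^D=Nr_m^D$. Your preliminary check, using $|u_m|=s>1$ and the fact that the recursion with coefficient $N-1\neq 0$ determines each $M_t$, shows the closed form agrees with $L_{K-t}$ on all of $[0:K-1]$; the paper only asserts this through the phrase ``general solution,'' and your argument makes it explicit.
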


\begin{proof}
Starting from \eqref{eq:L_def} and using \eqref{eq:S1} and \eqref{eq:S2},
\[
L=\frac{N}{D}\left(\sum_{t=0}^{K-1}\binom{K}{t}M_t-\sum_{t=D}^{K-1}\binom{K-D}{t-D}M_t\right).
\]
Substituting $M_t=\sum_{m=0}^{D-1} c_m r_m^t$ and exchanging sums,
\[
L=\frac{N}{D}\sum_{m=0}^{D-1}c_m\left(\sum_{t=0}^{K-1}\binom{K}{t}r_m^t-\sum_{t=D}^{K-1}\binom{K-D}{t-D}r_m^t\right).
\]
Applying Lemma~\ref{lem:bin_ids},
\begin{align*}
&\sum_{t=0}^{K-1}\binom{K}{t}r_m^t-\sum_{t=D}^{K-1}\binom{K-D}{t-D}r_m^t \\
&\quad = \left((1+r_m)^K-r_m^K\right)-r_m^D\left((1+r_m)^{K-D}-r_m^{K-D}\right) \\
&\quad =(1+r_m)^{K-D}\left((1+r_m)^D-r_m^D\right).
\end{align*}
By Lemma~\ref{lem:roots}, $(1+r_m^{-1})^D=N$, equivalently $(1+r_m)^D=N r_m^D$, and thus
$(1+r_m)^D-r_m^D = N r_m^D - r_m^D = (N-1)r_m^D$,
which establishes \eqref{eq:L_intermediate}.
\end{proof}

Using $r_m=1/(u_m-1)$ gives $1+r_m=u_m/(u_m-1)$ and hence
\[
r_m^D(1+r_m)^{K-D}=\frac{u_m^{K-D}}{(u_m-1)^K}.
\]
Substituting \eqref{eq:cm_choice} into \eqref{eq:L_intermediate} yields
\begin{equation}\label{eq:L_root_unity}
L=\frac{N(N-1)^T}{D^2}\sum_{m=0}^{D-1}\frac{u_m^{T-D}}{(u_m-1)^T},
\end{equation} where $T = K-D+1$. 


Rewriting each summand in \eqref{eq:L_root_unity}, we have
\[
\frac{u_m^{T-D}}{(u_m-1)^T}
=
u_m^{-D}\bigl(1-u_m^{-1}\bigr)^{-T}.
\]
Using $u_m=\omega_m s$, we obtain $u_m^{-D}=(\omega_m s)^{-D}=s^{-D}=N^{-1}$ and
$1-u_m^{-1}=1-\omega_m^{-1}s^{-1}$. Substituting into \eqref{eq:L_root_unity} gives
\[
L=\frac{(N-1)^T}{D^2}\sum_{m=0}^{D-1}\bigl(1-\omega_m^{-1}s^{-1}\bigr)^{-T}.
\]
Since $\{\omega_m^{-1}:m\in[0:D-1]\}=\{\omega_m:m\in[0:D-1]\}$, reindexing yields
\begin{equation}\label{eq:L_avg_form}
L=\frac{(N-1)^T}{D^2}\sum_{m=0}^{D-1}\bigl(1-\omega_m s^{-1}\bigr)^{-T}.
\end{equation}

\begin{lemma}\label{lem:ratio}
For each $m\in[0:D-1]$,
\[
\frac{1-N^{-1}}{1-\omega_m s^{-1}}=\sum_{t=0}^{D-1}\omega_m^t s^{-t}.
\]
\end{lemma}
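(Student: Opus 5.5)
The identity is a finite geometric sum, so the plan is to evaluate the right-hand side directly and then clear denominators. Write $q\triangleq \omega_m s^{-1}$. Since $\omega_m^D=1$ and $s^D=N$, we have $q^D=\omega_m^D s^{-D}=N^{-1}$. The sum on the right is then
\[
\sum_{t=0}^{D-1}\omega_m^t s^{-t}=\sum_{t=0}^{D-1}q^t .
\]

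First we check that $q\neq 1$. Because $N>1$ we have $s>1$, so $|q|=s^{-1}<1$. In particular $1-q\neq 0$, which means the left-hand side of the lemma is well defined and the geometric-series formula applies.

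By the finite geometric series,
\[
\sum_{t=0}^{D-1}q^t=\frac{1-q^D}{1-q}=\frac{1-N^{-1}}{1-\omega_m s^{-1}},
\]
which is exactly the claimed identity. An equivalent route avoids division: multiply the sum by $(1-q)$, observe that the product telescopes to $1-q^D$, and then substitute $q^D=N^{-1}$.

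No step here is a real obstacle. The only points that need care are the substitution $q^D=N^{-1}$, which uses both $\omega_m^D=1$ and $s^D=N$, and the fact that $1-\omega_m s^{-1}\neq 0$, which comes from $s>1$.
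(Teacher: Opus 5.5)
Your proof is correct and is essentially the paper's argument: the paper multiplies $(1-\omega_m s^{-1})$ by the sum, telescopes to $1-\omega_m^D s^{-D}=1-N^{-1}$, and divides, which is the finite geometric series you use. You also check explicitly that $1-\omega_m s^{-1}\neq 0$ because $s>1$, a point the paper leaves implicit when it divides.
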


\begin{proof}
Multiplying $(1-\omega_m s^{-1})$ by $\sum_{t=0}^{D-1}\omega_m^t s^{-t}$ gives
\begin{align*}
(1-\omega_m s^{-1})\sum_{t=0}^{D-1}\omega_m^t s^{-t}
&=\sum_{t=0}^{D-1}\omega_m^t s^{-t}
-\sum_{t=0}^{D-1}\omega_m^{t+1}s^{-(t+1)} \\
&=1-\omega_m^D s^{-D}
=1-N^{-1},
\end{align*}
since $\omega_m^D=1$ and $s^D = N$. Dividing both sides by ${(1-\omega_m s^{-1})}$ yields the identity.
\end{proof}

For each $n\in[0:T(D-1)]$, define $a_n$ as in \eqref{eq:cdef}, so that
\begin{equation}\label{eq:poly_expand}
(1+x+\cdots+x^{D-1})^{T}
=
\sum_{n=0}^{T(D-1)} a_n\,x^n .
\end{equation}
In particular, $a_n\ge 0$, since $a_n$ equals the number of $T$-tuples $(n_1,\dots,n_T)\in[0:D-1]^T$
such that $\sum_{l=1}^T n_l=n$.

Fix $m\in[0:D-1]$. By Lemma~\ref{lem:ratio},
\begin{equation}\label{eq:powT}
\frac{(1-N^{-1})^T}{(1-\omega_m s^{-1})^T}
=
\left(\sum_{t=0}^{D-1}\omega_m^t s^{-t}\right)^T.
\end{equation}
Expanding the $T$th power using \eqref{eq:poly_expand} gives
\begin{equation}\label{eq:Texp}
\left(\sum_{t=0}^{D-1}\omega_m^t s^{-t}\right)^T
=
\sum_{n=0}^{T(D-1)} a_n\,\omega_m^{n}\,s^{-n}.
\end{equation}
Substituting \eqref{eq:Texp} into \eqref{eq:powT} gives
\begin{equation}\label{eq:ave_powT}
\frac{(1-N^{-1})^T}{(1-\omega_m s^{-1})^T} = 
\sum_{n=0}^{T(D-1)} a_n\,\omega_m^{n}\,s^{-n}.
\end{equation}
Averaging \eqref{eq:ave_powT} over $m\in[0:D-1]$ and applying Lemma~\ref{lem:rou} yields
\begin{align}
\frac{1}{D}\sum_{m=0}^{D-1}\frac{(1-N^{-1})^T}{(1-\omega_m s^{-1})^T} & =\sum_{n=0}^{T(D-1)} a_n\,s^{-n}\left(\frac{1}{D}\sum_{m=0}^{D-1}\omega_m^{n}\right)\nonumber\\
& =\sum_{k=0}^{S} a_{kD}\,s^{-kD} \nonumber \\ 
& = \sum_{k=0}^{S} a_{kD}\,N^{-k},\label{eq:avg_result}
\end{align} where $S = \lfloor T(D-1)/D\rfloor$. 

Returning to \eqref{eq:L_avg_form} and using $(N-1)^T=N^T(1-N^{-1})^T$, 
\begin{align*}
L
&=\frac{(N-1)^T}{D^2}\sum_{m=0}^{D-1}(1-\omega_m s^{-1})^{-T}\\
&=\frac{N^T}{D^2}\sum_{m=0}^{D-1}\frac{(1-N^{-1})^T}{(1-\omega_m s^{-1})^T}\\
& =\frac{N^T}{D}\left(\frac{1}{D}\sum_{m=0}^{D-1}\frac{(1-N^{-1})^T}{(1-\omega_m s^{-1})^T}\right).
\end{align*}
Substituting \eqref{eq:avg_result} yields
\[
L=\frac{N^T}{D}\sum_{k=0}^{S} a_{kD}\,N^{-k}
=\frac{1}{D}\sum_{k=0}^{S} a_{kD}\,N^{T-k},
\]
which matches \eqref{eq:Lpoly}.

Since \eqref{eq:Lpoly} is a finite sum of nonnegative integer powers of $N$ with coefficients $a_{kD}\ge 0$ for all $k\in [0:S]$, it follows that $L$ is a polynomial in $N$ with nonnegative coefficients.
Moreover, the highest power of $N$ occurs at $k=0$, and
\[
a_0=[x^0](1+x+\cdots+x^{D-1})^T=1,
\]
so the leading term of $L$ is $N^T/D$. This completes the proof of Theorem~\ref{thm:summation}.


\bibliographystyle{IEEEtran}
\bibliography{PIR_PC_Refs}

\end{document}